\newtheorem{theorem}{Theorem}
\newtheorem{lemma}[theorem]{Lemma}
\newcommand{\cX}{\mathcal{X}}
\newcommand{\cF}{\mathcal{F}}
\newcommand{\Second}{\textcolor{black}}
\newcommand{\First}{\textcolor{black}}
\begin{document}

\begin{frontmatter}

\begin{fmbox}
\dochead{Research}


\title{A practical approximation algorithm for solving massive instances of hybridization number for binary and nonbinary trees}


\author[
   addressref={aff1},                   
   noteref={n1},                        
   email={l.j.j.v.iersel@gmail.com}   
]{\inits{L}\fnm{Leo} \snm{van Iersel}}
\author[
   addressref={aff2},
   email={steven.kelk@maastrichtuniversity.nl}
]{\inits{S}\fnm{Steven} \snm{Kelk}}
\author[
   addressref={aff2},
         noteref={n2},                        
   email={nela.lekic@maastrichtuniversity.nl}
]{\inits{N}\fnm{Nela} \snm{Leki\'c}}
\author[
   addressref={aff3},
   email={celine.scornavacca@univ-montp2.fr}
]{\inits{C}\fnm{Celine} \snm{Scornavacca}}


\address[id=aff1]{
  \orgname{Centrum Wiskunde \& Informatica (CWI) }, 
  \street{P.O. Box 94079},                     %
  \postcode{1090 GB},                                
  \city{Amsterdam},                              
  \cny{The Netherlands}                                    
}
\address[id=aff2]{
  \orgname{Department of Knowledge Engineering (DKE), Maastricht University}, 
  \street{P.O. Box 616},                     %
  \postcode{ 6200 MD},                                
  \city{Maastricht},                              
  \cny{The Netherlands}                                    
}
\address[id=aff3]{%
  \orgname{ISEM, CNRS -- Universit\'e Montpellier II},
  \street{Place Eug\`ene Bataillon},
  \postcode{34095},
  \city{Montpellier},
  \cny{France}
}


\begin{artnotes}
\note[id=n1]{was supported by a Veni grant of The Netherlands Organisation for Scientific Research (NWO)} 
\note[id=n2]{was supported by a Vrije Competitie grant of  of The Netherlands Organisation for Scientific Research (NWO)} 
\end{artnotes}
\end{fmbox}


\begin{abstractbox}

\begin{abstract} 
\parttitle{Background}
Reticulate events play an important role in determining evolutionary relationships. The problem of computing the minimum number of such events to explain discordance between two phylogenetic trees is a hard computational problem. Even for binary trees, exact solvers struggle to solve instances with reticulation number larger than 40-50. 
\parttitle{Results}
Here we present \textsc{CycleKiller} and \textsc{NonbinaryCycleKiller}, the first methods to produce solutions verifiably close to optimality for instances with hundreds or even thousands of reticulations. 
\parttitle{Conclusions}
Using simulations, we demonstrate that these algorithms run quickly for large and difficult instances, producing solutions that are very close to optimality. As a spin-off from our simulations we also present \textsc{TerminusEst}, which is the fastest exact method currently available that can handle nonbinary trees: this is used to measure the accuracy of the \textsc{NonbinaryCycleKiller} algorithm. All three methods are based on extensions of previous theoretical work~\cite{cyclekiller,terminusest,nonbinCK} and are publicly available. \Second{We also
apply our methods to real data.}
\end{abstract}


\begin{keyword}
\kwd{hybridization number}
\kwd{phylogenetic networks}
\kwd{approximation algorithms}
\kwd{directed feedback vertex set}
\end{keyword}


\end{abstractbox}
%

\end{frontmatter}

\section{Background}

\Second{Phylogenetic trees are} used in biology to represent the evolutionary history of a set ${\cX}$ of species (or \emph{taxa}) \cite{MathEvPhyl,reconstructingevolution}. They are trees whose leaves are bijectively labeled by ${\cX}$ and whose internal vertices represent the ancestors of the species set; they can be rooted or unrooted. Since in a rooted tree edges have a direction, the concepts of indegree and outdegree of a vertex are well defined. \emph{Binary} rooted (phylogenetic)  trees are rooted (phylogenetic) trees whose internal vertices have outdegree~2. \emph{Nonbinary} rooted (phylogenetic) trees have no restriction on the outdegree of inner vertices.

Biological events in which a species derives its genes from different ancestors, such as hybridization, recombination and horizontal gene transfer events, cannot be modelled by a tree. To be able to represent such events, a generalization of trees is considered which allows vertices with indegree two or higher, known as \emph{reticulations}. This model, which is called a \emph{{rooted} phylogenetic network}, is of growing importance to biologists~\cite{bapteste}. For detailed background information we refer the reader to \cite{HusonRuppScornavacca10,surveycombinatorial2011,Nakhleh2009ProbSolv}.

{Although phylogenetic networks are more general than phylogenetic trees, trees are still often the basic building blocks from which phylogenetic networks are constructed. Specifically, there are many techniques available for constructing gene trees. However, when more genes are analyzed, topological conflicts between individual gene phylogenies can arise for methodological or biological reasons (e.g. aforementioned reticulate phenomena such as hybridization). This has led computational biologists to try and quantify the amount of reticulation that is needed to simultaneously explain two trees.

To state this problem more formally, \Second{we have that a  phylogenetic tree~$T$ on~$\cX$ is a refinement of a  phylogenetic tree~$T'$ on the same set~$\cX$ if $T$ can be obtained from $T'$ by deleting edges and identifying their incident vertices. Then,} 
we say that a phylogenetic network~$N$ on~$\cX$ \emph{displays} a phylogenetic tree~$T$ on~$\cX$ if~$T$ can be obtained from a subgraph of~$N$ by contracting edges. Informally, this means that (a refinement of)~$T$ can be obtained from~$N$ by, for each reticulation vertex of~$N$, ``switching off'' all but one of its incoming edges and then suppressing all indegree-1 outdegree-1 vertices (\Second{i.e. replacing paths of these vertices by one edge}). Given two rooted phylogenetic trees~$T_1$ and~$T_2$ on~$\cX$, the problem then becomes to determine the minimum number of reticulation events contained in a phylogenetic network~$N$ on~$\cX$ displaying both trees (where an indegree-$d$ reticulation counts as~$d-1$ reticulation events). The value we are minimizing is often called the \emph{hybridization number} and instead of the term phylogenetic network, the term \emph{hybridization network} is often used. It is known that the problem of computing hybridization numbers is both NP-hard and APX-hard~\cite{bordewich07a}, but it is not known whether it is in APX (i.e. whether it admits a polynomial-time approximation algorithm that achieves a constant approximation ratio).

Until recently, most research on the hybridization number of two phylogenetic trees had focused on the question of how to exactly compute this value using fixed parameter tractable (FPT) algorithms, where the parameter in question is the hybridization number~$r$ of the two trees. For an introduction to FPT we refer to~\cite{Flum2006,DowneyFellows99}.

For binary trees, algorithmic progress has been considerable in this area, with various authors reporting increasingly sophisticated FPT algorithms~\cite{bordewich2,hybridnet,quantifyingreticulation,whiddenFixed}. The fastest algorithms currently implemented are the algorithm available inside the package \textsc{Dendroscope}~\cite{Dendroscope3}, based on~\cite{fastcomputation}, and the sequence of progressively faster algorithms in the \textsc{HybridNet} family~\cite{hybridnet,chen2012algorithms,chen2013ultrafast}. The fastest theoretical FPT algorithm has running time $O( 3.18^{r}n )$ \cite{whiddenFixed}, where~$n$ is the number of taxa in the trees.

Even though in practice it rarely happens that trees are binary, the nonbinary variant of the problem has been less studied. The nonbinary version is also FPT \cite{linzsemple2009,terminusest} and a (non-FPT) algorithm has recently been implemented in \textsc{Dendroscope}~\cite{Dendroscope3}.

Such (FPT) algorithms do, however, have their limits. The running time still grows exponentially in~$r$, albeit usually at a slower rate than algorithms that have a running time of the form $n^{f(r)}$, where~$f$ is some function of~$r$. In practice this means that existing algorithms can only handle instances of binary trees when~$r$ is at most 40-50 and instances of nonbinary trees when~$r$ is at most 5-10.

These limitations are problematic. Due to ongoing advances in DNA sequencing, more and more species and strains are being sequenced. Consequently, biologists use trees with more and more taxa and software that can handle large trees is required. For such large and/or difficult trees one can try to generate heuristic or approximate solutions, but how far are such solutions from optimality? In~\cite{cyclekiller} we showed that the news is worrying. Indeed, we showed that polynomial-time constant-ratio approximation algorithms exist if and only if such algorithms exist for the problem Directed Feedback Vertex Set (DFVS). However, DFVS is a well-studied problem in combinatorial optimization and to this day it is unknown if it permits such an algorithm. Pending a major breakthrough in computer science, it therefore seems difficult to build polynomial-time algorithms which approximate hybridization number well. On the positive side, we showed that in polynomial time an algorithm with approximation ratio $O(\log r \log \log r)$ is possible. However, this algorithm is purely of theoretical interest and is not useful in practice.

\subsection{New algorithms: \textsc{CycleKiller} and \textsc{NonbinaryCycleKiller}\label{subsecAlgo}}

In this article we extend the theoretical work of \cite{cyclekiller} slightly and give it a practical twist to yield a fast approximation algorithm which we have made publicly available as the program \textsc{CycleKiller}. Furthermore, we give an implementation of the algorithm presented in~\cite{nonbinCK}, available as \textsc{NonbinaryCycleKiller}.

The worst-case running time of these approximation algorithms is exponential. However, as we demonstrate with experiments, the running time of our algorithms is in practice extremely fast. For large and/or massively discordant binary trees, \textsc{CycleKiller} is typically orders of magnitude faster than the \textsc{HybridNet} algorithms and the algorithm in \textsc{Dendroscope}. The performance gap between \textsc{NonbinaryCycleKiller} and its exact counterparts is less pronounced, but still significant, especially in its fastest mode of operation.

Of course, exact algorithms attempt to compute optimum solutions, whereas our algorithms only give approximate solutions. Nevertheless, our experiments show that when \textsc{CycleKiller} and \textsc{NonbinaryCycleKiller} are run in their most accurate mode of operation, an approximation ratio very close to~1 is not unusual, suggesting that the algorithms often produce solutions close to optimality and well within the worst-case approximation guarantee.

The idea behind the binary and nonbinary algorithm is similar. Specifically, we describe an algorithm with approximation ratio ${d(c+1)}$ for the hybridization number problem on two binary trees and an algorithm with approximation ratio ${d(c+3)}$ for the hybridization number problem on two nonbinary trees by combining a $c$-approximation for the problem MAF (Maximum Agreement Forest) with a $d$-approximation for the problem DFVS. Both these problems are NP-hard so polynomial-time algorithms attaining $c=1$ or $d=1$ are not realistic. Nevertheless, there exist extremely fast FPT algorithms for solving MAF on binary trees exactly (i.e. $c=1$), the fastest is \textsc{rSPR} by Whidden, Beiko and Zeh \cite{whiddenRSPRwebsite,whiddenRSPRexp} although the MAF algorithm inside \cite{chen2013ultrafast} is also competitive. Moreover, we observe that the type of DFVS instances that arise in practice can easily be solved using Integer Linear Programming (ILP) (and freely-available ILP solver technology such as GLPK), so $d=1$ is also often possible. 

Combining these two exact approaches gives us, in the binary case, an exponential-time approximation algorithm with worst-case approximation ratio~2 that for large instances still runs extremely quickly; this is the \texttt{2-approx} option of \textsc{CycleKiller}. In practice, we have observed that the upper bound of 2 is often pessimistic, with much better approximation ratios observed in experiments (1.003 on average for the simulations presented in this article). We find that this algorithm already allows us to cope with much bigger trees than the \textsc{HybridNet} algorithms or the algorithm in \textsc{Dendroscope}.

Nevertheless, for truly massive trees it is often not feasible to have~${c=1}$. Fortunately there exist linear-time algorithms which achieve $c=3$~\cite{whiddenFixed}. This, coupled with the fact that (even for such trees) it remains feasible to use an exact ($d=1$) solver for DFVS, means that in practice we achieve a 4-approximation for gigantic binary trees; this is the \texttt{4-approx} option of \textsc{CycleKiller}. Again, the ratio of 4  is a worst-case bound and we suspect that in practice we are doing much better than 4. However, this cannot be experimentally verified due to the lack of good lower bounds for such massive instances. In any case, the main advantage of this option is that it 
can, without too much effort, cope with trees with hundreds or thousands of taxa and hybridization number of a similar order of magnitude. An implementation of \textsc{CycleKiller} and accompanying documentation can be downloaded from \url{http://skelk.sdf-eu.org/cyclekiller}. Networks created by the algorithm can be viewed in \textsc{Dendroscope}.

For the nonbinary case, there also exist exact and approximation algorithms for MAF~\cite{nonbinCK,whiddenFixed,whiddenFixedmulti}. In case when one of the input trees is binary we can still use the exact (thus $c=1$) and approximate ($c=3$) algorithms given in \cite{whiddenFixed} (referred to as \textsc{rSPR}) to obtain respectively a 4-approximation and a 6-approximation of the hybridization number problem for nonbinary trees. When both input trees are nonbinary, then we must use the somewhat less optimized exact ($c=1$) and approximate ($c=4$) algorithms described in \cite{nonbinCK}. We then obtain 4- and 7-approximations (because in the nonbinary case $d=1$ is still easily attainable using ILP).

To measure the approximation ratios attained by \textsc{NonbinaryCycleKiller} in practice we have also implemented and made publicly available the exact nonbinary algorithm \textsc{TerminusEst}, based on the theoretical results in~\cite{terminusest}. \textsc{TerminusEst} will be of independent interest because it is currently the fastest exact nonbinary solver available.

\Second{\textsc{CycleKiller}, \textsc{NonbinaryCycleKiller} and \textsc{TerminusEst} can be downloaded respectively from \url{http://skelk.sdf-eu.org/cyclekiller} \cite{cyclekillerURL}, from \url{http://homepages.cwi.nl/~iersel/cyclekiller}  \cite{NBCKURL}, and from \url{http://skelk.sdf-eu.org/terminusest} \cite{testURL}}.

\subsection{Theoretical and practical significance}

We have described, implemented and made publicly available two algorithms with two desirable qualities: they terminate quickly even for massive instances of hybridization number and give a non-trivial guarantee of proximity to optimality. These are the first algorithms with such properties. Both algorithms are based on a non-trivial marriage of MAF and DFVS solvers (both exact and approximate), meaning that further advances in solving MAF and DFVS will directly lead to improvements in \textsc{CycleKiller} and \textsc{NonbinaryCycleKiller}.

This article also improves the theoretical work given in \cite{cyclekiller}, which also proposed using DFVS but beginning from a trivial Agreement Forest (AF) known as a \emph{chain forest}. Here we use a smarter starting point: an (approximate) MAF, and it is this insight which makes a 2-approximation (rather than the 6-approximation implied by \cite{cyclekiller}) possible when using an exact DFVS solver. Other articles have also had the idea of cycle-breaking in AFs: the advanced FPT algorithm of Whidden et al \cite{whiddenFixed} -- which has not been implemented -- and the algorithms in the aforementioned \textsc{HybridNet} family. However, both algorithms start the cycle-breaking from many starting points. In contrast, our algorithm requires only a \emph{single} starting point, i.e. a single (approximate) solution to MAF.

Here, we only present the theory behind the binary algorithm. The nonbinary case is more involved and we refer the reader to~\cite{nonbinCK} in which we introduce it. Note that our results for the binary case do not follow from the results for the nonbinary case in~\cite{nonbinCK} because here we obtain a better constant in the approximation ratio. After a presentation of the binary algorithm in Section~\ref{sec:binalg}, we will show the results of some experiments with binary trees in Section~\ref{sec:binexp} and nonbinary trees in Section~\ref{sec:nonbinexp}. \Second{Finally, in Section \ref{subsec:biodata} we
demonstrate that both \textsc{TerminusEst} and \textsc{NonbinaryCycleKiller} are easily
capable of generating optimal (respectively, nearly optimal) solutions on
a real biological dataset originally obtained from the GreenPhylDB database}.

\subsection{Technical note}

At the time the experiments on binary trees were conducted (i.e. for the preliminary version of this article  \cite{appliedcyclekiller}) \textsc{HybridNet} was the fastest algorithm available in its family. It has recently been superceded by the faster \textsc{UltraNet} \cite{chen2013ultrafast}. We believe, however, that it is neither necessary nor desirable to re-run the binary experiments, for the following reasons. In the same period the  solver
\textsc{rSPR} has also increased dramatically in speed (it is now at v1.2), leading to a corresponding speed-up in \textsc{CycleKiller}. In fact, both \textsc{rSPR} and the algorithms in the \textsc{HybridNet} family are constantly in flux and are always being improved, so any experimental setup is prone to age extremely quickly. However, the conclusions that we can derive from these experiments are unlikely to change much over time. Given that the algorithms in the \textsc{HybridNet} family (and the theoretical algorithm in
\cite{whiddenFixed}) implicitly have to explore exponentially many optimal and sub-optimal solutions to the \textsc{MAF} problem, the running time of \textsc{MAF} solvers (and thus also \textsc{CycleKiller}) is likely for the foreseeable future to remain much better than the running time of solvers for hybridization number. The central message is stable: approximating hybridization number by splitting it into \textsc{MAF} and \textsc{DFVS} instances yields extremely competitive approximation ratios for instances that exact hybridization number solvers will probably never be able to cope with.

\section{Methods}

\subsection{Preliminaries}

Let~$\cX$ be a finite set (e.g. of species). A \emph{rooted phylogenetic} $\cX$-\emph{tree} is a rooted tree with no vertices with indegree~1 and outdegree~1, a root with indegree~0 and outdegree at least~2, and leaves bijectively labelled by the elements of~$\cX$. We identify each leaf with its label and use~$L(T)$ to refer to the leaf set (or label set) of~$T$. A rooted phylogenetic $\cX$-tree is called \emph{binary} if each nonleaf vertex has outdegree two. We henceforth call a rooted, binary phylogenetic $\cX$-tree a \emph{tree} for short. For a tree~$T$ and a set $\cX'\subset \cX$, we use the notation $T(\cX')$ to denote the minimal subtree of~$T$ that contains all elements of~$\cX'$ and $T|\cX'$ denotes the result of suppressing all indegree-1 outdegree-1 vertices in $T(\cX')$.

The following definitions apply only to binary trees. Definitions for nonbinary trees are analogous but slightly more technical~\cite{nonbinCK}.

We define a \emph{forest} as a set of trees. Each element of a forest is called a \emph{component}. Let~$T$ be a tree and~$\cF$ a forest. We say that~$\cF$ is a \emph{forest for}~$T$ if, \Second{for all~$F\in\cF$}, $T|L(F)$ is isomorphic to~$F$  and the trees $\{T(L(F)),  F\in \cF\}$ are vertex-disjoint subtrees of~$T$ whose leaf-set union equals~$L(T)$. If~$T_1$ and~$T_2$ are two trees, then a forest~$\cF$ is an \emph{agreement forest} of~$T_1$ and~$T_2$ if it is a forest for~$T_1$ and~$T_2$. The number of components of~$\cF$ is denoted $|\cF|$.

We define \emph{cleaning up} a directed graph as repeatedly suppressing indegree-1 outdegree-1 vertices, removing indegree-0 outdegree-1 vertices and removing unlabelled outdegree-0 vertices until no such operation is possible. Observe that, if~$\cF$ is a forest for~$T$, $\cF$ can be obtained from~$T$ by removing~$|\cF|-1$ edges and cleaning up. From now on we consider $T_1,T_2$ as trees on the same taxon set.

\medskip
\noindent{\bf Problem:} Maximum Agreement Forest ({\sc MAF})\\
\noindent {\bf Instance:} Two rooted, binary phylogenetic trees~$T_1$ and~$T_2$.\\
\noindent {\bf Solution:} An agreement forest~$\cF$ of~$T_1$ and~$T_2$.\\
\noindent {\bf Objective:} Minimize~$|\cF|-1$.\\

The directed graph $IG(T_1,T_2,\cF)$, called the \emph{inheritance graph}, is the directed graph whose vertices are the components of~$\cF$ and which has an edge~$(F,F')$ precisely if either
\begin{itemize}
\item[$\bullet$] there is a directed path in~$T_1$ from the root of $T_1(L(F))$ to the root of $T_1(L(F'))$ or;
\item[$\bullet$] there is a directed path in~$T_2$ from the root of $T_2(L(F))$ to the root of $T_2(L(F'))$.
\end{itemize}
An agreement forest~$\cF$ of~$T_1$ and~$T_2$ is called an \emph{acyclic agreement forest} if the graph $IG(T_1,T_2,\cF)$ is acyclic. A \emph{maximum acyclic agreement forest (MAAF)} of~$T_1$ and~$T_2$ is an acyclic agreement forest of~$T_1$ and~$T_2$ with a minimum number of components.

\medskip
\noindent{\bf Problem:} Maximum Acyclic Agreement Forest ({\sc MAAF})\\
\noindent {\bf Instance:} Two rooted, binary phylogenetic trees~$T_1$ and~$T_2$. \\
\noindent {\bf Solution:} An acyclic agreement forest~$\cF$ of~$T_1$ and~$T_2$.\\
\noindent {\bf Objective:} Minimize~$|\cF|-1$.\\

We use MAF$(T_1,T_2)$ and MAAF$(T_1,T_2)$ to denote the optimal solution value of the problem {\sc MAF} and {\sc MAAF} respectively, for an instance~$T_1,T_2$.

A \emph{rooted phylogenetic network} on~$\cX$ is a directed acyclic graph with no vertices with indegree~1 and outdegree~1 and leaves bijectively labelled by the elements of~$\cX$. Rooted phylogenetic networks, which are sometimes also called hybridization networks, will henceforth be called \emph{networks} for short in this paper.  A tree~$T$ on~$\cX$ is \emph{displayed} by a network~$N$ if~$T$ can be obtained from a subtree of~$N$ by contracting edges. A \emph{reticulation} is a vertex~$v$ with~$\delta^-(v)\geq 2$ (with $\delta^-(v)$ denoting the indegree of~$v$). The \emph{reticulation number} (sometimes also called hybridization number) of a network~$N$ with root~$\rho$ is given by
$$r(N)=\sum_{v\ne\rho}(\delta^-(v)-1).$$
It was shown that the optimum to {\sc MAAF} is equal to the optimum of the following problem \cite{baroni05}.\\
\\
\noindent{\bf Problem:} {\sc MinimumHybridization}\\
\noindent {\bf Instance:} Two rooted binary phylogenetic trees $T_1$ and $T_2$. \\
\noindent {\bf Solution:} A rooted phylogenetic network~$N$ that displays $T_1$ and $T_2$.\\
\noindent {\bf Objective:} Minimize~$r(N)$.\\

Moreover, it was shown that, for two trees~$T_1,T_2$, \emph{any} acyclic agreement forest for~$T_1$ and~$T_2$ with~$k+1$ components can be turned into a phylogenetic network that displays~$T_1$ and~$T_2$ and has reticulation number~$k$, and vice versa. Thus, any approximation for {\sc MAAF} gives an approximation for {\sc MinimumHybridization}.\\

Finally, a \emph{feedback vertex set} of a directed graph is a subset of the vertices that contains at least one vertex of each directed cycle. Equivalently, a subset of the vertices of a directed graph is a \emph{feedback vertex set} if removing these vertices from the graph makes it acyclic.

\medskip
\noindent{\bf Problem:} Directed Feedback Vertex Set ({\sc DFVS})\\
\noindent {\bf Instance:} A directed graph~$D$. \\
\noindent {\bf Goal:} Find a feedback vertex set of~$D$ of minimum size.\\
\\
We note that the definition of \textsc{MinimumHybridization} easily generalises to nonbinary trees, since the definition of \emph{display} allows the image of each input tree
in the network to be more ``resolved'' than the original tree. However, the definitions of (acyclic) agreement forests are different in the nonbinary case~\cite{nonbinCK}.\\

\subsection{The algorithm for binary trees}
\label{sec:binalg}

We show how {\sc MAAF} can be approximated by combining algorithms for {\sc MAF} and {\sc DFVS}. In particular, we will prove the following theorem.

\begin{theorem}\label{thm:appr}
If there exists a $c$-approximation for {\sc MAF} and a $d$-approx-imation for {\sc DFVS}, then there exists a $d(c+1)$-approximation for {\sc MAAF} (and thus for {\sc MinimumHybridization}).
\end{theorem}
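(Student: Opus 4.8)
The plan is to run the algorithm exactly as advertised and to control the two error sources separately. I will use two facts from the preliminaries: every acyclic agreement forest is an agreement forest, so $\text{MAF}(T_1,T_2)\le\text{MAAF}(T_1,T_2)$; and the correspondence that an acyclic agreement forest with $k+1$ components is interchangeable with a network of reticulation number~$k$, together with $\text{MAAF}=\text{MinimumHybridization}$, so that it suffices to exhibit a network displaying~$T_1$ and~$T_2$ with few reticulations. First I would run the $c$-approximation for \textsc{MAF} to get an agreement forest~$\cF$ with $|\cF|-1\le c\cdot\text{MAF}(T_1,T_2)\le c\cdot\text{MAAF}(T_1,T_2)$. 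Then I would form $IG(T_1,T_2,\cF)$ and run the $d$-approximation for \textsc{DFVS} on it to obtain a feedback vertex set~$S$ with $|S|\le d\cdot f^*$, where $f^*$ denotes the size of a minimum feedback vertex set of $IG(T_1,T_2,\cF)$.

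The first substantive claim is a construction: from~$\cF$ and \emph{any} feedback vertex set~$S$ of $IG(T_1,T_2,\cF)$ one can build a network displaying~$T_1$ and~$T_2$ with at most $(|\cF|-1)+|S|$ reticulations. The idea is that deleting the vertices of~$S$ leaves an acyclic inheritance graph, so the remaining $|\cF|-|S|$ components assemble into a network that costs one reticulation per non-root component; each component of~$S$ is then reinserted using one \emph{additional} reticulation edge so that it respects its ancestry in both~$T_1$ and~$T_2$, contributing two reticulations rather than one. The bookkeeping then gives $(|\cF|-|S|-1)+2|S|=(|\cF|-1)+|S|$, and one has to check that this reinsertion really does display both trees and destroys every directed cycle.

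The second, harder claim is the lower bound $f^*\le\text{MAAF}(T_1,T_2)$. I would fix an optimal acyclic agreement forest~$\cF^*$, so $|\cF^*|-1=\text{MAAF}(T_1,T_2)$, overlay its cuts on~$\cF$, and let~$S$ be the set of components of~$\cF$ that are split by~$\cF^*$. Since~$\cF^*$ is reached from~$T_1$ by deleting $|\cF^*|-1$ edges, at most $\text{MAAF}(T_1,T_2)$ components of~$\cF$ can be split, so $|S|\le\text{MAAF}(T_1,T_2)$; the surviving components each lie inside a single component of~$\cF^*$, and the crux is to argue that the subgraph of $IG(T_1,T_2,\cF)$ they induce embeds into the acyclic graph $IG(T_1,T_2,\cF^*)$ and is therefore acyclic, so that~$S$ is a feedback vertex set. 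Granting this, $f^*\le|S|\le\text{MAAF}(T_1,T_2)$.

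Combining the three inequalities yields a network of reticulation number at most $(|\cF|-1)+|S|\le c\cdot\text{MAAF}+d\cdot f^*\le(c+d)\,\text{MAAF}\le d(c+1)\,\text{MAAF}$, where the last step uses $d\ge1$; since $\text{MAAF}=\text{MinimumHybridization}$ the same guarantee holds for the latter. I expect the main obstacle to be precisely the embedding step in the lower bound: one must verify that an ancestry relation in~$T_1$ or~$T_2$ between two uncut components of~$\cF$ lifts to an ancestry relation between the components of~$\cF^*$ that contain them, so that a cycle among the survivors would force a cycle in the acyclic $IG(T_1,T_2,\cF^*)$. The construction and the arithmetic are comparatively routine; the comparison of the two different forests through their inheritance graphs is the delicate part.
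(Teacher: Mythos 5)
Your second claim, the bound $f^*\le\text{MAAF}(T_1,T_2)$ on the minimum feedback vertex set of the component-level inheritance graph $IG(T_1,T_2,\cF)$, is essentially sound: it is the component-level analogue of the paper's Lemma~\ref{lem:splitting}, proved exactly as you suggest by overlaying the edge cuts of an optimal acyclic agreement forest $\cF^*$ on $\cF$. (One repair is needed: the map from surviving components of $\cF$ to components of $\cF^*$ is not injective, so it is not literally an embedding; when two survivors land in the same component of $\cF^*$ you must argue separately that, inside a common component, the ancestor relations induced by $T_1$ and $T_2$ coincide and form a partial order, so no cycle can be confined there.) The fatal gap is your first claim: that from $\cF$ and a feedback vertex set $S$ of $IG(T_1,T_2,\cF)$ one can build a network displaying $T_1$ and $T_2$ with $(|\cF|-1)+|S|$ reticulations, i.e.\ that each component of $S$ can be kept intact and repaired at the price of one extra reticulation. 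This is false, and no reinsertion gadget can save it: if $F\in S$ and $G$ form a $2$-cycle ($F$ above $G$ in $T_1$, $G$ above $F$ in $T_2$), any network keeping both intact must contain an edge from inside the image of $G$ to above the root of $F$ and an edge from inside the image of $F$ to above the root of $G$ --- a directed cycle, which is forbidden in a network no matter how many extra reticulation edges you spend. The only remedy is to \emph{split} the components of $S$, and the number of splits one component needs is not $1$; it is governed by cycles among individual out-degree-$2$ vertices, which is exactly why the paper runs DFVS on the \emph{extended} inheritance graph $D$ (vertices are all out-degree-$2$ vertices of $M$, Lemma~\ref{lem:dfvs}), where each feedback vertex costs exactly one extra component and hence one extra reticulation.

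A concrete counterexample: let $T_1=(x_1,((g_1,g_1'),(x_2,((g_2,g_2'),x_3))))$ and $T_2=(g_1,(g_1',(g_2,(g_2',(x_1,(x_2,x_3))))))$, and take the agreement forest $\cF=\{F,C_1,C_2\}$ with $F=(x_1,(x_2,x_3))$, $C_1=(g_1,g_1')$, $C_2=(g_2,g_2')$. Every directed cycle of $IG(T_1,T_2,\cF)$ passes through $F$ (the only edge avoiding $F$ is $C_1\to C_2$), so $S=\{F\}$ is a feedback vertex set and your bound promises a network with $(3-1)+1=3$ reticulations, i.e.\ $\text{MAAF}(T_1,T_2)\le 3$. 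But $\text{MAAF}(T_1,T_2)=4$: in the extended graph $D$ the root of $F$ forms a $2$-cycle with $C_1$ while the internal vertex of $F$ (the ancestor of $x_2,x_3$) forms a vertex-disjoint $2$-cycle with $C_2$, so by Lemma~\ref{lem:dfvs} any splitting of $\cF$ needs at least two extra components, and a finite check over all agreement forests of these seven-taxon trees shows that no four-component acyclic agreement forest exists at all. The same pattern with $m$ interleaved cherries still has $f^*=1$ while $D$ contains $m$ pairwise vertex-disjoint $2$-cycles, so the component-level graph undercounts the number of cycle-breaking operations by an arbitrarily large amount. A further warning sign you could have caught: your two claims together would yield a $(c+d)$-approximation, strictly better than the paper's $d(c+1)$ whenever $d>1$.
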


Note that this theorem does not follow from Theorem 2.1 of  \cite{nonbinCK}, since there the approximation ratio for {\sc MAAF} is a $d(c+3)$-approximation. 

To prove the theorem, suppose there exists a $c$-approximation for {\sc MAF}. Let~$T_1$ and~$T_2$ be two trees and let~$M$ be an agreement forest returned by the algorithm. Then,
\begin{equation}\label{eq:mafmaaf}
|M| -1   \leq c \cdot \text{MAF}(T_1,T_2) \leq c\cdot \text{MAAF}(T_1,T_2).
\end{equation}

An $M$-\emph{splitting} is an acyclic agreement forest that can be obtained from~$M$ by removing edges and cleaning up.

\begin{lemma}\label{lem:splitting}
Let~$T_1$ and~$T_2$ be two trees and~$M$ an agreement forest of $T_1$ and~$T_2$. Then, there exists an {$M$-splitting} of size at most $\text{MAAF}(T_1,T_2)+|M|$.
\end{lemma}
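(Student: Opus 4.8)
The plan is to fix an optimal acyclic agreement forest and to \emph{overlay} it on $M$, showing that the common refinement of the two forests is itself an $M$-splitting whose number of components obeys the claimed bound. Concretely, let $M^*$ be a maximum acyclic agreement forest of $T_1$ and $T_2$, so that $|M^*| = \text{MAAF}(T_1,T_2)+1$. Recall that $M$ (resp.\ $M^*$) is obtained from $T_1$ by deleting a set $E_M$ (resp.\ $E_{M^*}$) of $|M|-1$ (resp.\ $|M^*|-1$) edges and cleaning up. I would define $\cF$ to be the forest obtained from $T_1$ by deleting $E_M \cup E_{M^*}$ and cleaning up.

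First I would check that $\cF$ is a genuine agreement forest of $T_1$ and $T_2$ that is obtainable from $M$ by deleting edges and cleaning up. The key observation is that two leaves lie in the same component of $\cF$ exactly when the path joining them in $T_1$ avoids every edge of $E_M \cup E_{M^*}$; hence the leaf-partition induced by $\cF$ is the common refinement of those induced by $M$ and $M^*$, and in particular it refines the partition induced by $M$. Since restricting an isomorphism $T_i|L(F) \cong F$ to a subset of leaves preserves the isomorphism, while passing to a finer partition can only shrink the spanning subtrees $T_i(L(\cdot))$ (keeping them vertex-disjoint and leaving their leaf-union equal to $L(T_i)$), the forest $\cF$ is a forest for both $T_1$ and $T_2$ and is reachable from $M$ by edge deletions and cleaning up.

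Next comes the size bound, which is the easy part: deleting $k$ edges from a tree and cleaning up produces at most $k+1$ labelled components, so $|\cF| \le |E_M \cup E_{M^*}| + 1 \le (|M|-1)+(|M^*|-1)+1 = |M| + \text{MAAF}(T_1,T_2)$. Note that this genuinely relies on the tree structure: viewed as an abstract meet of partitions the common refinement could be far larger, so the cut-edge formulation is essential here.

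The main obstacle is to show that $\cF$ is acyclic, that is, that refining the acyclic forest $M^*$ cannot create a directed cycle in $IG(T_1,T_2,\cF)$; together with the previous paragraphs this will exhibit $\cF$ as an $M$-splitting of the required size. I would argue via the natural map $\pi$ sending each component of $\cF$ to the component of $M^*$ that contains it, analysing a hypothetical cycle $F_1 \to \cdots \to F_k \to F_1$ in two cases. For an edge $(F,F')$ with $\pi(F)=\pi(F')=C$, both leaf sets sit inside a single component $C$, and because $T_1|L(C)\cong T_2|L(C)$ the ancestor relations induced by $T_1$ and by $T_2$ coincide there; all such \emph{internal} edges therefore respect one tree partial order and cannot close a cycle by themselves. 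For an edge $(F,F')$ with $\pi(F)=C\neq C'=\pi(F')$, I would show it forces an edge $(C,C')$ in $IG(T_1,T_2,M^*)$ in the same direction; the delicate point is locating the roots of $T_1(L(F)),T_1(L(F')),T_1(L(C)),T_1(L(C'))$ along a single root-to-leaf path, where the vertex-disjointness of $T_1(L(C))$ and $T_1(L(C'))$ is exactly what excludes the wrong nesting of their roots and fixes the orientation. Combining the two cases, any cycle in $IG(T_1,T_2,\cF)$ projects under $\pi$ to a directed closed walk in $IG(T_1,T_2,M^*)$ using at least one genuine edge, hence to a cycle there, contradicting the acyclicity of $M^*$. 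This is the one step I expect to require real care; the rest is bookkeeping with cut edges.
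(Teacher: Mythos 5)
Your construction is exactly the paper's: overlay a maximum acyclic agreement forest $M^*$ on $M$ by deleting $E_M \cup E_{M^*}$ from $T_1$ and cleaning up, verify that the result is an acyclic agreement forest obtainable from $M$ by edge deletions, and bound its size by $|E_M \cup E_{M^*}|+1 \le |M| + \text{MAAF}(T_1,T_2)$. The only difference is one of detail: the paper simply asserts that the overlay is acyclic (``because it can be obtained by removing edges from $F$, which is acyclic''), whereas you actually prove this via the projection map $\pi$, splitting cycle edges into within-component ones (killed by the isomorphism $T_1|L(C)\cong T_2|L(C)$) and cross-component ones (whose orientation is fixed by vertex-disjointness) --- a correct elaboration of a step the paper leaves implicit.
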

\begin{proof}
Consider a maximum acyclic agreement forest~$F$ of~$T_1$ and~$T_2$. For ${i\in\{1,2\}}$, $F$ can be obtained from~$T_i$ by removing a set of edges, say~$E_F^i$, and cleaning up. Moreover, also $M$ can be obtained from~$T_i$ by removing a set of edges, say~$E_M^i$, and cleaning up.

Now consider the forest~$S$ obtained from~$T_1$ by removing $E_M^1\cup E_F^1$ and cleaning up. Then,
\begin{itemize}
\item[$\bullet$] $S$ is an agreement forest of~$T_1$ and~$T_2$ because it can be obtained from~$T_2$ by removing edges $E_M^2\cup E_F^2$ and cleaning up;
\item[$\bullet$] $S$ is acyclic because it can be obtained by removing edges from~$F$, which is acyclic, and cleaning up;
\item[$\bullet$] $S$ can be obtained from~$M$ by removing edges and cleaning up.
\end{itemize}

Hence,~$S$ is an $M$-splitting. Furthermore, $|S| \leq |E_F^1| + |E_M^1| +1$. The lemma follows since $|E_F^1|  = \text{MAAF}(T_1,T_2)$ and $|M|= \First{|E_M^1|} +1$.
\end{proof}

Let {OptSplitting$_{T_1,T_2}(M)$} denote the size of a minimum-size $M$-splitting. Combining  Lemma~\ref{lem:splitting} and equation~\eqref{eq:mafmaaf}, we obtain
\begin{equation}\label{eq:mafmaaf2}
\text{OptSplitting}_{T_1,T_2}(M) -1 \leq (c+1) \text{MAAF}(T_1,T_2)
\end{equation}

We will now show how to find an approximation for the problem of finding an optimal $M$-splitting. We do so by reducing the problem to {\sc DFVS}. We construct an input graph~$D$ for {\sc DFVS} \First{(called the \emph{extended inheritance} graph)} as follows. For every vertex of~$M$ that has outdegree 2 (in~$M$), we create a vertex in~$D$. There is an edge in~$D$ from a vertex~$u$ to a vertex~$v$ precisely if in either~$T_1$ or~$T_2$ (or in both) there is a directed path from~$u$ to~$v$. \First{An example is in Figure~\ref{fig:example}.} We claim the following.

\begin{figure}
  \centerline{\includegraphics[scale=.6]{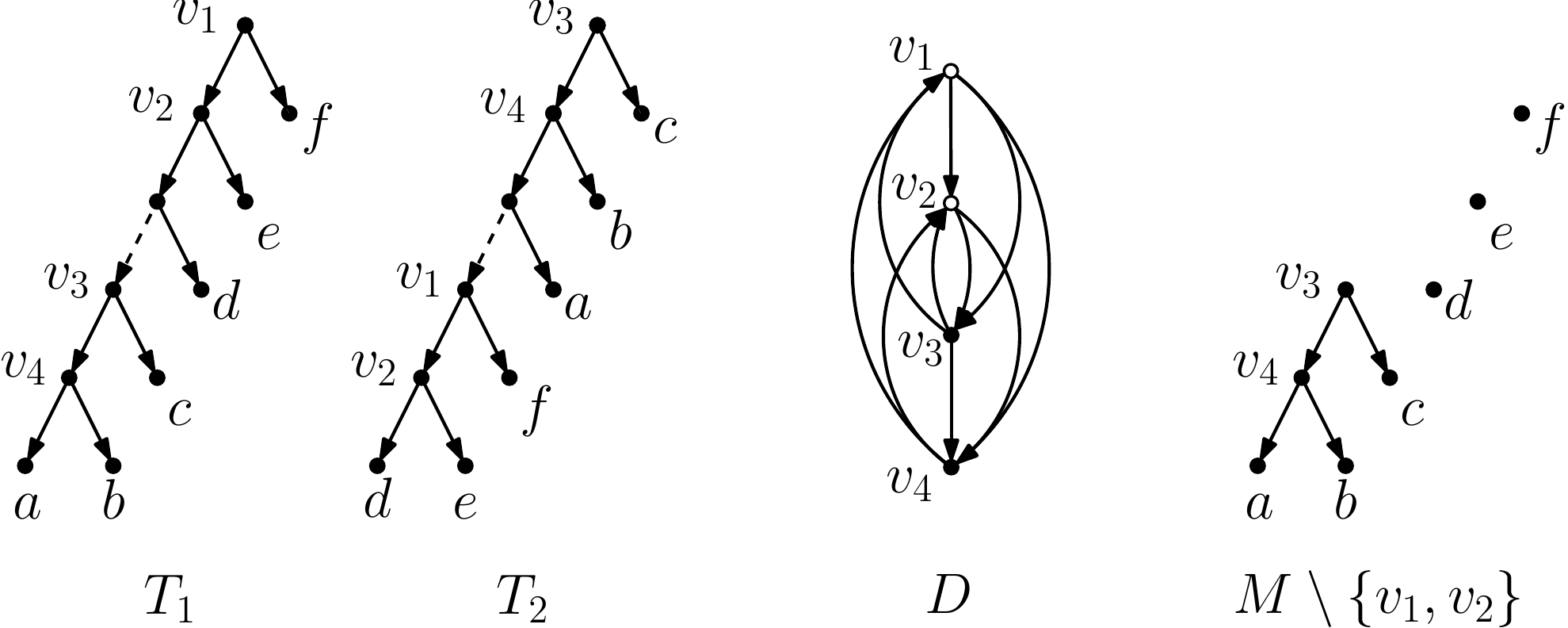}}
  \caption{\First{Two binary trees~$T_1$ and~$T_2$ and the auxiliary graph~$D$. A maximum agreement forest~$M$ of~$T_1$ and~$T_2$ is obtained by deleting the dashed edges. Graph~$D$ can be made acyclic by deleting either both filled or both unfilled vertices. Hence, removing either~$v_1$ and~$v_2$ or~$v_3$ and~$v_4$ from~$M$ makes it an acyclic agreement forest for~$T_1$ and~$T_2$, see Lemma~\ref{lem:dfvs}. The acyclic agreement forest~$M\setminus\{v_1,v_2\}$ obtained by removing~$v_1$ and~$v_2$ from~$M$ is depicted on the right.}}
  \label{fig:example}
\end{figure}

\begin{lemma}\label{lem:dfvs}
A subset~$V'$ of the vertices of~$D$ is a feedback vertex set of~$D$ if and only if removing~$V'$ from~$M$ makes it an acyclic agreement forest.
\end{lemma}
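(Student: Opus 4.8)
The plan is to reduce the biconditional to a purely cycle-theoretic statement and then match up cycles on the two sides. First I would record the structural observation that refining an agreement forest yields an agreement forest: if $M$ is a forest for $T_i$, then any forest obtained from $M$ by removing edges and cleaning up still embeds as vertex-disjoint subtrees covering $L(T_i)$, hence is again an agreement forest of $T_1$ and $T_2$. Thus ``removing $V'$ from $M$'' always produces an agreement forest $M\setminus V'$ (each removed outdegree-2 vertex detaching a subtree and refining $M$), so the only content of the lemma is \emph{acyclicity}. Since $V'$ is a feedback vertex set of $D$ exactly when the subgraph of $D$ induced on the surviving vertices is acyclic, it suffices to prove: $IG(T_1,T_2,M\setminus V')$ is acyclic if and only if $D$ restricted to the vertices outside $V'$ is acyclic. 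I would establish this through a two-way correspondence between directed cycles of $D$ (living on individual surviving outdegree-2 vertices, with arcs recording $T_1$- or $T_2$-ancestry) and directed cycles of the inheritance graph of $M\setminus V'$ (living on components, with arcs recording root-ancestry).

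The key easy ingredient is a within-component remark: for any component $C$ of an agreement forest, $T_1|L(C)$ and $T_2|L(C)$ are both isomorphic to $C$, so $T_1$ and $T_2$ induce the \emph{same} ancestor order on the internal vertices of $C$, namely the tree order of $C$ itself. Consequently the arcs of $D$ among the vertices of a single component form a strict partial order and cannot close up into a cycle; every directed cycle of $D$ must therefore span at least two distinct components. This is precisely what allows a cycle of $D$ to project down to a genuine cycle at the component level.

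For the direction ``$V'$ a feedback vertex set $\Rightarrow M\setminus V'$ acyclic'' I would argue the contrapositive. Starting from a directed cycle $C_1\to\cdots\to C_k\to C_1$ in $IG(T_1,T_2,M\setminus V')$, each arc $C_i\to C_{i+1}$ asserts that the root of $C_i$ is a $T_1$- or $T_2$-ancestor of the root of $C_{i+1}$; I would lift these roots (and, where a root has been removed, suitable surviving vertices of the corresponding component) to obtain a closed directed walk in $D$ that avoids $V'$, and hence a directed cycle there, contradicting that $V'$ meets every cycle of $D$. For the reverse direction I would start from a directed cycle of $D$ avoiding $V'$, assign to each cycle-vertex the component of $M\setminus V'$ containing it, and use the observation above (consecutive vertices in distinct components yield an inheritance arc in the matching direction) to read off a closed walk, and hence a cycle, in $IG(T_1,T_2,M\setminus V')$.

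The hard part will be the faithful translation between vertex-level ancestry in $D$ and root-level ancestry in the inheritance graph, because the cutting operation relocates roots: a vertex placed in $V'$ becomes the root of a freshly created component, and in fact the only vertex of a component of $M\setminus V'$ that can belong to $V'$ is its root (a ``cherry'' component whose root lies in $V'$ therefore leaves no surviving vertex in $D$). When lifting an inheritance cycle back into $D$ I must choose representatives that genuinely survive in $D\setminus V'$ and verify that each ancestry arc not only persists but points the correct way — ruling out the scenario in which $u$ is an ancestor of $u'$ in some $T_j$ while the two component roots are nested so that the induced inheritance arc runs in the opposite direction. I expect this case analysis, confirming the direction of every arc and the survival of every needed representative, to be the core of the argument; everything else is bookkeeping on top of the definitions of \emph{cleaning up}, $M$-splitting, and the inheritance graph.
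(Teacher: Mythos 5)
The central problem is your reading of ``removing $V'$ from $M$''. You take it to mean cutting the edge above each vertex of $V'$, so that ``a vertex placed in $V'$ becomes the root of a freshly created component'' and components of $M\setminus V'$ may contain $V'$-vertices as roots. The operation the lemma intends --- and the only one under which the statement is true --- is outright \emph{deletion} of the vertices of $V'$ (equivalently: remove their outgoing edges and clean up), so that the two children of each deleted vertex become roots of new components and no vertex of $V'$ occurs anywhere in $M\setminus V'$. Under your reading the lemma is false. Take $T_1=(((b_1,b_2),a_1),a_2)$, $T_2=(((a_1,a_2),b_1),b_2)$ and $M=\{(a_1,a_2),\,(b_1,b_2)\}$, which is an agreement forest of $T_1$ and $T_2$. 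Here $D$ is a $2$-cycle on the two cherry roots $\alpha,\beta$, so $V'=\{\alpha\}$ is a feedback vertex set; but $\alpha$ is already the root of its component, so cutting the edge above it changes nothing and the inheritance graph of the resulting forest is still that same $2$-cycle --- whereas deleting $\alpha$ yields $\{a_1,\,a_2,\,(b_1,b_2)\}$, whose inheritance graph is acyclic, exactly as the lemma promises. This misreading is also what manufactures the ``hard part'' you anticipate: under deletion semantics every root of every component of $M\setminus V'$ automatically survives in $D\setminus V'$ (this is precisely the paper's one-line remark that the cycle vertices of the inheritance graph ``can not be in $V'$''), so there is never a component whose root lies in $V'$, no alternative representatives to choose, and no cherry-component case analysis; under your semantics, no amount of such analysis can succeed, because the claim itself fails.

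Apart from this, your architecture is the paper's: both directions proceed by matching directed cycles of $D\setminus V'$ with directed cycles of the inheritance graph of $M\setminus V'$, projecting a $D$-cycle onto the roots of the components containing its vertices, and lifting an inheritance cycle to the component roots viewed as vertices of $D$. Two of your side remarks deserve credit. First, your within-component observation (arcs of $D$ between vertices of one component follow that component's own tree order, since $T_1$ and $T_2$ induce the same ancestry on it) is a genuine improvement: the paper silently assumes that a cycle of $D\setminus V'$ meets at least two components when it asserts that the projected roots ``form a cycle'', and your remark is exactly what rules out the degenerate one-component case. Second, your worry about an ancestry arc reversing direction at the root level is legitimate but is settled by vertex-disjointness of the embeddings: if $u\in C$ is a $T_j$-ancestor of $u'\in C'$ with $C\neq C'$, and the root of $T_j(L(C'))$ were a strict ancestor of the root of $T_j(L(C))$, then the latter root would lie on the $T_j$-path from the root of $T_j(L(C'))$ down to $u'$, a path that stays inside $T_j(L(C'))$, contradicting disjointness; hence the inheritance arc runs from $C$ to $C'$, matching the arc of $D$. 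With the deletion semantics restored, your plan collapses to the paper's short proof plus these two (worthwhile) clarifications.
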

\begin{proof}
We show that~$D\setminus V'$ has a directed cycle if and only if the inheritance graph of $M\setminus V'$ has a directed cycle.

To prove this, first suppose that there is a cycle $v_1,v_2,\ldots,v_k =
v_1$ in the inheritance graph of $M\setminus V'$. The vertices in the inheritance
graph of $M\setminus V'$ correspond to the roots of the components of $M\setminus V'$. Since these roots have outdegree 2 in $M\setminus V'$, they had outdegree 2 in $M$, and
are thus vertices of $D$. So the vertices $v_1,v_2,\ldots,v_k$ that form the
cycle are vertices of $D$. Since these vertices are in the inheritance
graph of $M\setminus V'$, they can not be in $V'$ and so they are vertices of $D\setminus V'$. The reachability relation between these vertices in $D\setminus V'$ is the same
as in the inheritance graph of $M\setminus V'$. So, the vertices $v_1,v_2,\ldots,v_k$
form a cycle in $D\setminus V'$.

Now suppose that there is a cycle $w_1,w_2,\ldots,w_k = w_1$ in $D\setminus V'$. Each
of the vertices $w_1, w_2,\ldots, w_k$ is a vertex with outdegree-2 in $M$.
Some of them might be roots of components, while others are not.
However, observe that if there is a directed path from a vertex $u$ to a
vertex $v$ in $T_1$ (or in $T_2$) then there is also a directed path from the root of
the component of $M\setminus V'$ that contains $u$ to the root of the component of
$M\setminus V'$ that contains $v$. Hence, there is a directed cycle in the
inheritance graph of $M\setminus V'$, formed by the roots of the components of
$M\setminus V'$ that contain $w_1,w_2,\ldots,w_k$.
\end{proof}

\begin{proof}[\First{Proof of Theorem \ref{thm:appr}}]
Suppose that there exists a $d$-approximation for {\sc DFVS}. Let~FVS be a feedback vertex set returned by this algorithm and let~$\text{MFVS}$ be a minimum feedback vertex set. Then, removing the vertices of~$\text{MFVS}$ from~$M$ gives an optimal $M$-splitting. Furthermore, {OptSplitting$_{T_1,T_2}(M) = |M|+|\text{MFVS}|$}. This is because for every vertex in a cycle $C$, its parent in~$M$ must participate in some cycle that contains elements of $C$. So if we start by removing the root of the component we are splitting and subsequently remove only those vertices whose parents have already been removed we see that we add at most one component per vertex. In fact, because vertices of~$D$ all have out-degree~2 in~$M$, we add exactly one component per vertex.

By removing the vertices of~FVS from~$M$, we obtain an acyclic agreement forest~$\cF$ such that
\begin{eqnarray*}
|\cF| - 1 & = & |M| + |\text{FVS}| - 1\\
& \leq & |M| + d\cdot |\text{MFVS}|  - 1\\
& \leq & d (|M| +|\text{MFVS}| - 1)\\
& = & d (\text{OptSplitting}_{T_1,T_2}(M) - 1)\\
& \leq & d (c+1) \text{MAAF}(T_1,T_2),
\end{eqnarray*}
where the last inequality follows from equation~\eqref{eq:mafmaaf2}. Thus,~$\cF$ is a $d(c+1)$-approximation to {\sc MAAF}, which concludes the proof of Theorem~\ref{thm:appr}.
\end{proof}

\First{Theorem \ref{thm:appr} implies that a solution to the MAAF problem for a given instance can be constructed by (i)   finding a solution $\mathcal{F}$  to the MAF problem for the same instance  (ii) constructing the  extended inheritance graph $D$ for $\mathcal{F}$ (iii) finding a solution $V$ for the DFVS problem on the graph $D$ and (iv) modifying $\mathcal{F}$ accordantly to $V$.}
\section{\Second{Experiments} and discussion}

\subsection{Practical experiments with binary trees}
\label{sec:binexp}
To assess the performance of {\sc CycleKiller}, a simulation study was undertaken. We generated 3 synthetic datasets, an \emph{easy}, a \emph{medium} and a \emph{hard} one, containing respectively 800, 640 and 640 pairs of rooted binary phylogenetic trees. 

The easy data set was created by varying  two parameters, namely the number of taxa $n$ and the number of rSPR-moves $k$ used to obtain the
second tree from the first (note that this number is an upper bound on the actual rSPR distance). The 800 pairs of rooted binary phylogenetic trees were created by varying  
$n$ in $\{20,50,100,200\}$ and $k$ in $\{5,10,...,25\}$, and then creating 40 different instances per each combination of parameters.
Each pair $(T_1,T_2)$ of  rooted binary phylogenetic trees for a given set of parameters  $n$ and $k$ is created as follows: 
The first tree $T_1$ on $\mathcal{X}=\{x_1,\dots,x_n\}$ is generated  by first creating a set of  $n$ leaf vertices bijectively labeled by the set $\mathcal{X}$. Then, 
two vertices $u$ and $v$, both with indegree 0, are randomly
picked  and  a new vertex $w$, along with two new edges $(w,u)$ and $(w,v)$, is created. This is done until only one vertex with no ancestor, the root, is present. The  second tree $T_2$ is  obtained from $T_1$ by applying $k$ rSPR-moves.
The medium and the hard data sets were generated in the same way as the easy one, but for different choices of the parameters:   $n$ in $\{50,100,200,300\}$ and $k$ in $\{15,25,40,55\}$ for the medium one and $n$ in $\{100,200,400,500\}$  and $k$ in $\{40,60,80,100\}$ for the hard one.

The exact hybridization number has been computed by \textsc{HybridNet} \cite{hybridnet}, available from \url{http://www.cs.cityu.edu.hk/~lwang/software/Hn/treeComp.html} or with  \textsc{Dendroscope} \cite{Dendroscope3}, available from \url{http://www.dendroscope.org}. We will refer to these algorithms as the \emph{exact algorithms}. Each instance has been run on a single core of an Intel Xeon E5506 processor.

Each  run that took more than one hour was aborted. For each instance, we ran our program with the option \texttt{2-approx}, and, in case the latter did not finish within one hour, we ran it again, this time using the option \texttt{4-approx}, always with a one-hour limit (\Second{see Section \ref{subsecAlgo}}). We used
the program \textsc{rSPR} v1.03 \cite{whiddenRSPRwebsite,whiddenRSPRexp} to solve or approximate MAF
and GLPK v4.47 (\url{http://www.gnu.org/software/glpk/}) to solve the \textcolor{black}{following simple polynomial-size ILP formulation of \textsc{DFVS}:}
\textcolor{black}{
\begin{align*}
\min & \text{  }\sum_{v\in V} x_v &\\
\text{s.t.}\\
& 0 \leq \ell_{v} \leq |V| - 1     				&&\text{for all } v \in V \\
& \ell_{v} \geq \ell_{u} + 1 - |V|x_u - |V|x_v			&&\text{for all } e=(u,v)\in E \\ 		
& \ell_{v} \in \mathbb{Z} 					&&\text{for all } v \in V \\         
& x_{v}\in\{0,1\}     						&&\text{for all } v \in V 
\end{align*}
%
}
\textcolor{black}{Given a directed graph $D=(V,E)$, the binary variables $x_v$ model whether
a vertex is in the feedback vertex set, and the integer variables $\ell_v$ model the positions
of the surviving vertices in the induced topological order. The edge constraints enforce the
topological order. Note that an edge constraint is essentially eliminated if one or both endpoints of the edge are in the feedback vertex set}.

For all instances of the easy data set, {\sc CycleKiller} finished with the \texttt{2-approx} option within the one hour limit, while for 33 instances the exact algorithms were unable to compute the hybridization number. Note that, even for ``easy'' instances, computing the exact hybridization number can take a very long time. To give the reader an idea, for 9 runs of the easy data, \textsc{Dendroscope} and \textsc{HybridNet} did not complete within 10 days. Table~1 shows a summary of the results. It can be seen that {\sc CycleKiller} was much faster than the exact algorithms. Moreover, for 96.6\% of the instances for which an exact algorithm could find a solution, {\sc CycleKiller} also found an optimal solution. While the theoretical worst-case approximation ratio of the \texttt{2-approx} option of {\sc CycleKiller} is~2, in our experiments it performed very close to a 1-approximation.

For the medium data set, {\sc CycleKiller} finished with the \texttt{2-approx} option for 613 instances, and for the remaining ones with the \texttt{4-approx} option. The exact algorithms could compute the hybridization number for only 199 instances (out of 640). For 97.5\% of these instances, {\sc CycleKiller} also found an optimal solution, but with a much better running time.
Regarding the hard data set, 444 runs were completed with the \texttt{2-\hspace{0pt}approx} option and for the remaining ones we were able to use the \texttt{4-approx} option within the given time constraint. Unfortunately, the exact algorithms were unable to compute  the hybridization number for any tree-pair of this data set and hence we could not compute the average approximation ratios.
Over all our experiments, the maximum hybridization number that the exact algorithms could handle was 25.\footnote{In \cite{fastcomputation}, it has been shown that this number can go up to 40 when running Dendroscope on a similar processor but allocating all cores for one instance, i.e. exploiting the possibilities of parallel computation of  this implementation.} In contrast, the \texttt{2-approx} option of {\sc CycleKiller} could be used for instances for which the size of a MAF was up to 97, and thus for instances for which the hybridization number was at least 97.
 
To find the limits of the \texttt{4-approx} option of \textsc{CycleKiller}, we also tested it on randomly generated trees. On a normal laptop, it could construct networks with up to 10,000 leaves and up to 10,000 reticulations within 10 minutes. Since the number of reticulations found is at most four times the optimal hybridization number, this implies that the \texttt{4-approx} option of \textsc{CycleKiller} can handle hybridization numbers up to at least 2,500. These randomly generated trees are, however, biologically meaningless and, therefore, we conducted the extensive experiment described above on trees generated by rSPR moves. Finally we note that over all experiments the worst approximation ratio we encountered was 1.2.


\subsection{Practical experiments with nonbinary trees}
\label{sec:nonbinexp}

To run the simulations with \textsc{NonbinaryCycleKiller}, we used a subset of the trees from the easy set of binary experiments. We then applied random edge contractions in order to obtain nonbinary trees. Hence, we have the same two parameters as before, namely the number of taxa $n \in\{20, 50, 100\}$ and the number of rSPR-moves $k \in \{5, 10, 15, 20\}$, and an additional parameter $\rho \in \{25, 50, 75\}$ which measures the percentage of the edges of an original binary tree that were contracted in order to obtain a nonbinary tree. We could only use smaller values of~$n$ and~$k$ from the easy set of experiments because exact solvers for nonbinary MAF (upon which \textsc{NonbinaryCycleKiller} is built) and exact solvers for nonbinary \textsc{MinimumHybridization} (which is important to measure the accuracy of \textsc{NonbinaryCycleKiller} in practice) are slower than their binary counterparts.

We performed two runs of experiments.\footnote{\textcolor{black}{We note that \textsc{NonbinaryCycleKiller} uses a row-generation ILP formulation - based on \cite{dfvsApprox} - to solve DFVS, rather
than the polynomial-size formulation used by \textsc{CycleKiller}. ILP is in neither case a bottleneck for the running time.}} One run with instances consisting of one binary and one nonbinary tree, and one run with instances consisting of two nonbinary trees.  

For the experiments with one binary and one nonbinary tree, we were still able to use the \textsc{rSPR} algorithm \cite{whiddenRSPRwebsite,whiddenFixed}, which has a better running time and approximation ratio compared to the available algorithm for two nonbinary trees. When \textsc{rSPR} is used in exact mode, \textsc{NonbinaryCycleKiller} yields a theoretical worst-case approximation ratio of 4. When \textsc{rSPR} is used in its 3-approximation mode, \textsc{NonbinaryCycleKiller} yields a theoretical worst-case approximation ratio of~6 (\Second{see Section \ref{subsecAlgo}}). The results of this run are summarized in Table \ref{table:rspr_for_maf}.

For the experiments with two nonbinary trees, the \textsc{rSPR} software can no longer be used, and instead we used the exact and 4-approximate MAF algorithm described in~\cite{nonbinCK}. This makes \textsc{NonbinaryCycleKiller} behave as a 4-approximation and 7-approximation respectively (\Second{see Section \ref{subsecAlgo}}). Note that the exact algorithm~\cite{nonbinCK} is considerably slower than \textsc{rSPR}, meaning that in practice \textsc{NonbinaryCycleKiller} struggles with two nonbinary trees more than
when at most one of the trees is nonbinary. The results for this run are summarized in Table~\ref{table:maf_for_maf}.

The exact hybridization number in both runs was computed by \textsc{TerminusEst}~\cite{testURL}.

Each instance that took longer than~10 minutes to compute was aborted and the running time was set to~600 seconds. The averages of the running-times are taken over all instances, with running-time taken to be~600 if the program timed out for that instance. (We used a shorter time-out than in the binary experiments because of the observation that, in the nonbinary case, exact algorithms running longer than~10 minutes almost always took longer than~60 minutes too.)

Note that we did not compare the performance of \textsc{NonbinaryCycleKiller} to \textsc{Dendroscope} because \textsc{TerminusEst} has better running times than the exact nonbinary \textsc{MinimumHybridization} solver inside \textsc{Dendroscope} (data not shown).

To enable a clearer analysis we divided the trees into representative ``simple'' and ``tricky'' ones based on two parameters,~$n$ and~$k$. Parameter values for the simple set were $n \in \{20, 50\}$, $k \in \{5, 10, 15\}$ and for the tricky set $n \in \{50, 100\}$, $k=20$. In addition we varied the percentage of contracted edges (in a single tree in the first run and in both trees in the second run).

In Table~\ref{table:rspr_for_maf} we show running times and solution quality of our algorithm when one of the input trees is binary. For the simple set of instances (regardless of the percentage of edge-contractions) we see that the more accurate version of our algorithm, the 4-approximation, had a better running time than the exact algorithm, and at the same time had an average approximation ratio very close to~1. Far more interesting is to see what happens with tricky instances. As predicted, the running time of the exact algorithm is much higher for tricky instances due to the higher hybridization numbers. On the other hand, the running time of the 4-approximation does not rise significantly at all, whilst still attaining an approximation ratio again very close to 1. Another thing to note is that the percentage of contraction only seems to affect the running time of the exact algorithm. The practical worst-case approximation ratio observed in these experiments was~1.75 for the 4-approximation and~3 for the 6-approximation.

Table \ref{table:maf_for_maf} shows our results on instances with two nonbinary trees. The exact algorithm for MAF is in this case much slower and this affects the running times even for the simple set. While the 4-approximation version has an average approximation ratio very close to 1 again, the running time is in this case worse than that of \textsc{TerminusEst}. For the tricky set the situation is even more significant; the exact MAF algorithm cannot deal with reticulation numbers above~15, while \textsc{TerminusEst} can get slightly further. On the other hand, the 7-approximation still runs much faster than \textsc{TerminusEst}, both for simple and tricky instances, while having an average approximation ratio of less than 2.6. The practical worst-case approximation ratio observed in these experiments was~1.5 for the 4-approximation and~4 for the 7-approximation. 


It is worth noting that, for the 4-approximation, the running time for the 75\%-contraction trees is considerably lower than the one for the 50\%-contraction trees. This is due to the fact that a high contraction in both trees causes the hybridization number of the instance to drop, and a lower hybridization number leads to a better running time. 
Also note that the exact solver \textsc{TerminusEst} seems more able to cope with the
tricky 25\%-contraction instances than the tricky 50\%-contraction instances. This is probably because, although low contraction rates yield a higher hybridization number, the trees remain ``relatively binary'' and this can induce more efficient branching in the underlying FPT algorithm~\cite{terminusest}. It is plausible that with 50\%-contraction the instances suffer from the disadvantage of relatively high hybridization number without the branching advantages associated with (relatively) binary trees.

To find the limits of the \texttt{7-approx} option of \textsc{NBCK}, we also tested it on huge, biologically meaningless, randomly generated trees. Below some results:
\begin{itemize}
\item [$\bullet$]1000 leaves, 25\%-contraction, on average 995 reticulations in 63 sec.
\item [$\bullet$] 1000 leaves, 50\%-contraction, on average 989 reticulations in 82 sec.
\item [$\bullet$] 1000 leaves, 75\%-contraction, on average 840 reticulations, in 656 sec. 
\end{itemize}
Computation times of this last run of experiments do not include the network construction.

\Second{\subsection{Practical experiments on biologically relevant trees}
\label{subsec:biodata}
Finally, we tested our methods on phylogenetic trees obtained from GreenPhylDB  \cite{Rouard22092010} -- version 3, a database containing  twenty-two full genomes of members of the plantae kingdom, ranging from algae to angiosperms. We were able  to retrieve from the database the
9903 rooted phylogenetic trees associated to the gene families contained in the database (the gene trees), along with the rooted phylogenetic tree describing 
the history of the  twenty-two species contained in GreenPhylDB (the species tree). Note that the species tree for  these species is not completely resolved, i.e. it is nonbinary. Among the gene trees, 2769 contain less than 3 species and they were discarded. Of the remaining 7134 trees, only 204 were directly usable for testing our methods.
Indeed, because of gene duplication events arising in genomes, some species host several copies of the same gene, hence individual gene trees usually have several leaves labeled with identical species names.
Unfortunately, our methods do not handle such multi-labeled gene trees (MUL trees). We thus transformed  the MUL trees into trees containing single copies of labels, applying the tools described in \cite{Scornavacca2011,SSIMULURL} to the forest $F$ of 7134 trees. As in Section 4.1 of \cite{Scornavacca2011}, we obtained four data sets: 
$F_1$, $F_2$, $F^p_3$ and $F^s_3$, respectively containing 204, 1003, 5924 and 5789 trees.
Note that only $F^s_3$ contains nonbinary trees. Finally, for each single labeled tree $G \in (F_1 \cup F_2 \cup F^p_3 \cup F^s_3)$, we restricted the species tree $S$ (containing 22 taxa) to the leaves of $G$ and we applied our methods to all so obtained pairs (restricted $S$,$G$). The results are presented in Tables \ref{lab:f1} - \ref{lab:f3s}. For all four datasets both \textsc{TerminusEst} and
\textsc{NonbinaryCycleKiller} ran extremely quickly, rarely taking more than a couple of seconds for each species-gene tree pair. Moreover, the clear conclusion with this dataset is that, although
the species-gene pairs are often incompatible, there are rarely many cycles to kill and
optimum solutions to the hybridization number problem are generally extremely close to
optimal solutions to MAF.}

\section{Conclusions}

Our experiments with binary trees show that \textsc{CycleKiller} is much faster than available exact methods once the input trees become sufficiently large and/or discordant. In over $96\%$ of the cases \textsc{CycleKiller} finds the optimal solution and in the remaining cases it finds a solution very close to the optimum. We have shown that the most accurate mode of the program produces solutions that are at most a factor~2 from the optimum. In practice, the average-case approximation ratio that we observed was~1.003. The fastest mode of the algorithm can be used on trees with thousands of leaves and provably constructs networks that are at most a factor of~4 from the optimum.

Our experiments with nonbinary trees highlight once again that the cycle-breaking technique described in this article is intrinsically linked to the current state-of-the-art in \textsc{MAF} algorithms. \textsc{TerminusEst} is faster than the most accurate mode of \textsc{NonbinaryCycleKiller} when both trees are nonbinary due to the fact that MAF solvers for two nonbinary trees have not yet been optimized to the same extent as their binary counterparts. In fact, \textsc{TerminusEst} is the best avaible exact method for nonbinary trees and can handle instances for which the optimum is up to~15-20. For other instances, \textsc{NonbinaryCycleKiller} in its fastest mode is much faster than \textsc{TerminusEst} and produces solutions that are at most a factor~4 from the optimum (less than 2.6 on average).

Finally, for instances with one binary and one nonbinary tree, the most accurate mode of \textsc{NonbinaryCycleKiller} is again much faster than \textsc{TerminusEst} and produces solutions that are at most a factor~1.75 from the optimum (less than 1.011 on average). 


\begin{backmatter}

\section*{Competing interests}
  The authors declare that they have no competing interests.

\section*{Author's contributions}
All the authors conceived the ideas, designed and conducted the experiments, and wrote and approved the paper. 

\section*{Acknowledgements}
A preliminary version of this paper (restricted to the binary case) appeared in the proceedings of the 12th Workshop on Algorithms in Bioinformatics (WABI 2012)~\cite{appliedcyclekiller}.
We thank Simone Linz and Leen Stougie for fruitful discussions. 
{This publication is the contribution no. 2014-040 of the Institut des Sciences de l'Evolution de Montpellier (ISE-M, UMR 5554). This work has been partially funded by the French {\it Agence Nationale de la Recherche}, {\it Investissements d'avenir/Bioinformatique} (ANR-10-BINF-01-02, {\it Ancestrome}), and it has benefited from the ISE-M computing facilities.

}

\bibliographystyle{bmc-mathphys}  
\bibliography{BMC_revision_v4}      




%



\newpage

\section*{Tables}

\begin{table}[h!]
\begin{center}
\begin{tabular}{lr|r|r|r|r|r|r|r|r|}
\cline{3-10}
& & \multicolumn{2}{|c|}{Exact algorithms} & \multicolumn{6}{|c|}{\sc CycleKiller}\\
\hline

 \multicolumn{1}{|l|}{Dataset} & \multicolumn{1}{|l|}{Total} & Com- & \multicolumn{1}{|l|}{Time} &  \multicolumn{2}{|c|}{\texttt{2-approx}}  &  \multicolumn{2}{|c|}{\texttt{4-approx}}&  \multicolumn{1}{|l|}{Ratio} &  \multicolumn{1}{|l|}{Opt.}\\
 \cline{5-6}
   \cline{7-8}
 \multicolumn{1}{|l|}{} & \multicolumn{1}{|l|}{runs} & \multicolumn{1}{|l|}{pleted} &  \multicolumn{1}{|l|}{} & \multicolumn{1}{|l|}{Compl.}  &  \multicolumn{1}{|c|}{Time}  & \multicolumn{1}{|l|}{Compl.}  &  \multicolumn{1}{|c|}{Time} & \multicolumn{1}{|c|}{} &  \multicolumn{1}{|l|}{found}\\
\hline
\multicolumn{1}{|l|}{Easy} & 800 & 767 & 798 & 800 & 3 & -  & -  & 1.003 & 96.6\% \\
\multicolumn{1}{|l|}{Medium} & 640 & 199 & 2572 & 613  & 212 & 27 & $<$1  & 1.002 & 97.5\%\\
\multicolumn{1}{|l|}{Hard} & 640 & 0 & 3600 & 440  &  1271& 200 &1.5 & - & -\\
\hline
\end{tabular}
\vspace{0.25cm}
\end{center}
\caption{Experimental results for instances with two binary trees. The third column indicates for how many instances at least one exact algorithm finished within one hour. The fifth column indicates for how many instances the \texttt{2-approx} option of {\sc CycleKiller} finished within one hour. For the remaining instances, the \texttt{4-approx} option finished within one hour, as can be seen from the seventh column. The average running time for the \texttt{2-approx} and the \texttt{4-approx} in seconds are reported respectively in the sixth and eighth column. The average approximation ratio (ninth column) is taken over all instances for which at least one exact method finished.  The last column indicates the percentage of those instances for which {\sc CycleKiller} found an optimal solution.\label{table:experiments}}
\end{table}

\begin{table}[h!]
 \begin{center}
  \begin{tabular}{l l|r|r|r|r|r|r|r|r|}
   \cline{3-10}
     & & \multicolumn{2}{|c}{\textsc{TerminusEst}} & \multicolumn{6}{|c|}{\textsc{NonbinaryCycleKiller}}\\
     \cline{5-10}
     & & \multicolumn{2}{|c}{} &\multicolumn{3}{|c|}{\texttt{4-approx}} & \multicolumn{3}{|c|}{\texttt{6-approx}}\\
   \hline
     \multicolumn{1}{|l|}{Contr.} & \multicolumn{1}{|l|}{Dataset} & $opt$ & Time & $r(N)$ & Time & Ratio & $r(N)$ & Time & Ratio \\
   \hline
     \multicolumn{1}{|l|}{25\%} & Simple &  7.504	&8.004	&7.567	&0.967	&1.007	&11.421	&0.996	&1.532 \\
     \multicolumn{1}{|l|}{}     & Tricky & 17.000	&203.650	&17.288	&3.675	&1.003	&27.238	&3.638	&1.600 \\
   \hline
     \multicolumn{1}{|l|}{50\%} & Simple & 6.736	&9.896	&6.829	&0.942	&1.008	&10.900	&0.925	&1.639 \\
     \multicolumn{1}{|l|}{}     & Tricky & 14.976	&374.263	&16.288	&3.388	&1.006	&26.413	&3.438	&1.640 \\ 
   \hline
     \multicolumn{1}{|l|}{75\%} & Simple & 5.139	&12.304	&5.263	&0.867	&1.011	&8.692	&0.963	&1.659 \\
     \multicolumn{1}{|l|}{}     & Tricky & 10.500	&391.575	&13.475	&3.263	&1.006	&23.200	&3.275	&1.633 \\
   \hline
	 \hline
	  \multicolumn{2}{|c|}{Worst case}  &20	&600	&22 &15 &1.75	&37	&13	&3 \\
	 \hline
  \end{tabular}
  \vspace{.1cm}
 \end{center}
 \caption{Summary of results for instances with one binary and one nonbinary tree. We list the average hybridization number found ($opt$ and $r(N)$), the average running time in seconds (Time) and where applicable the average approximation ratio (Ratio) for the three algorithms.\label{table:rspr_for_maf}}
\end{table}

\begin{table}[h!]
 \begin{center}
   \begin{tabular}{l l|r|r|r|r|r|r|r|r|}
   \cline{3-10}
     & & \multicolumn{2}{|c}{\textsc{TerminusEst}} & \multicolumn{6}{|c|}{\textsc{NonbinaryCycleKiller}}\\
     \cline{5-10}
     & & \multicolumn{2}{|c}{} &\multicolumn{3}{|c|}{\texttt{4-approx}} & \multicolumn{3}{|c|}{\texttt{7-approx}}\\
   \hline
     \multicolumn{1}{|l|}{Contr.} & \multicolumn{1}{|l|}{Dataset} & $opt$ & Time & $r(N)$ & Time & Ratio & $r(N)$ & Time & Ratio \\
   \hline
     \multicolumn{1}{|l|}{25\%} & Simple & 7.168	&12.971	&7.240	&43.967	&1.032	&16.338	&2.463	&2.343  \\
     \multicolumn{1}{|l|}{}     & Tricky & 16.148	&279.100	&-	&-	&-	&35.638	&7.000	&2.193 \\
   \hline
     \multicolumn{1}{|l|}{50\%} & Simple & 5.933	&11.150	&5.900	&41.325	&1.030	&13.721	&2.004	&2.405 \\
     \multicolumn{1}{|l|}{}     & Tricky & 13.216	&379.238	&- &- &-	&32.363	&7.200	&2.331 \\ 
   \hline
     \multicolumn{1}{|l|}{75\%} & Simple & 3.654	&1.121	&3.729	&4.208	&1.015	&9.075	&1.483	&2.590 \\
     \multicolumn{1}{|l|}{}     & Tricky &8.672	&183.150	&- &- &-	&21.950	&5.800	&2.294 \\
   \hline
	 \hline
	  \multicolumn{2}{|c|}{Worst case}  &20	&600	&29 &600 &1.5	&56	&22	&4 \\
	 \hline
  \end{tabular}
  \vspace{.1cm}
 \end{center} \caption{Summary of results for instances with two nonbinary trees. The layout of the table is the same as that of table \ref{table:rspr_for_maf}.\label{table:maf_for_maf}}
\end{table}

\begin{table}
\begin{tabular}{|c|c|c|c|}
\hline
 & MIN & AVG & MAX \\ 
\hline
Common taxa & 3 & 5.235 & 20 \\ 
\hline
\emph{opt} & 0 & 0.873 & 7 \\ 
\hline
Ratio 4-approx & 1 & 1.002 & 1.2 \\ 
\hline
Ratio 6-approx & 1 & 1.088 & 3 \\ 
\hline
Gap (T-EST - MAF) & 0 & 0.010 & 1 \\ 
\hline
Gap (4-approx - MAF) & 0 & 0.020 & 2 \\ 
\hline
Time T-EST & 0 & 0.221 & 3 \\ 
\hline
Time 4-approx & 0 & 0.270 & 1 \\ 
\hline
\end{tabular}

\caption{Summary of results for dataset $F_1$ (204 gene trees) originally obtained from
GreenPhylDB database. Common taxa is the number of taxa after restricting the
gene tree and the species tree to common taxa. \emph{opt} is the exact hybridization number, as
computed by \textsc{TerminusEst}. Ratio 4-approx (resp. 6-approx) is the ratio of the solution obtained
by {\sc NonbinaryCycleKiller} (running in 4-approx, resp. 6-approx mode) to the solution obtained by \textsc{TerminusEst}. Gap (T-EST - MAF) is the absolute gap between the optimum
MAF solution (here computed with \textsc{rSPR}) and the exact hybridization number, as computed by \textsc{TerminusEst}. Gap (4-approx - MAF) is the absolute gap between
the optimum MAF solution and the reticulation number of the solution generated by
\textsc{NonbinaryCycleKiller} running in its 4-approx mode. Time T-EST is the running time (in seconds) of
\textsc{TerminusEst}, and Time 4-approx is the running time (in seconds) of \textsc{NonbinaryCycleKiller}
running in its 4-approx mode. In 202 instances \textsc{TerminusEst} returned
the same size solution as \textsc{rSPR}, in 202 cases \textsc{TerminusEst} returned the
same size solution as \textsc{NonbinaryCycleKiller} (running in 4-approx mode), and in 201 cases
\textsc{NonbinaryCycleKiller} (running in 4-approx mode) returned the same size solution as \textsc{rSPR}.}
\label{lab:f1}
\end{table}

\begin{table}[h]
\begin{tabular}{|c|c|c|c|}
\hline
 & MIN & AVG & MAX \\ 
\hline
Common taxa & 3 & 11.704 & 22 \\ 
\hline
\emph{opt} & 0 & 2.854 & 10 \\ 
\hline
Ratio 4-approx & 1 & 1.025 & 2 \\ 
\hline
Ratio 6-approx & 1 & 1.264 & 3 \\ 
\hline
Gap (T-EST - MAF) & 0 & 0.048 & 1 \\ 
\hline
Gap (4-approx - MAF) & 0 & 0.165 & 3 \\ 
\hline
Time T-EST & 0 & 0.576 & 7 \\ 
\hline
Time 4-approx & 0 & 0.605 & 3 \\ 
\hline
\end{tabular}
\caption{Summary of results for dataset $F_2$ (1003 gene trees) originally obtained from
GreenPhylDB database. In 955 instances \textsc{TerminusEst} returned
the same size solution as \textsc{rSPR}, in 911 cases \textsc{TerminusEst} returned the
same size solution as \textsc{NonbinaryCycleKiller} (running in 4-approx mode), and in 880 cases
\textsc{NonbinaryCycleKiller} (running in 4-approx mode) returned the same size solution as \textsc{rSPR}.}
\label{lab:f2}
\end{table}

\begin{table}
\begin{tabular}{|c|c|c|c|}
\hline
 & MIN & AVG & MAX \\ 
\hline
Common taxa & 2 & 14.206 & 22 \\ 
\hline
\emph{opt} & 0 & 3.613 & 12 \\ 
\hline
Ratio 4-approx & 1 & 1.027 & 2 \\ 
\hline
Ratio 6-approx & 1 & 1.277 & 3 \\ 
\hline
Gap (T-EST - MAF) & 0 & 0.065 & 2 \\ 
\hline
Gap (4-approx - MAF) & 0 & 0.195 & 4 \\ 
\hline
Time T-EST & 0 & 0.689 & 21 \\ 
\hline
Time 4-approx & 0 & 0.729 & 3 \\ 
\hline
\end{tabular}
\caption{Summary of results for dataset $F_3^p$ (5924 gene trees) originally obtained from
GreenPhylDB database. In 5553 instances \textsc{TerminusEst} returned
the same size solution as \textsc{rSPR}, in 5297 cases \textsc{TerminusEst} returned the
same size solution as \textsc{NonbinaryCycleKiller} (running in 4-approx mode), and in 5030 cases \textsc{NonbinaryCycleKiller} (running in 4-approx mode) returned the same size solution as \textsc{rSPR}.}
\label{lab:f3p}
\end{table}

\begin{table}
\begin{tabular}{|c|c|c|c|}
\hline
 & MIN & AVG & MAX \\ 
\hline
Common taxa & 3 & 17.319 & 22 \\ 
\hline
\emph{opt} & 0 & 1.560 & 12 \\ 
\hline
Ratio 4-approx & 1 & 1.021 & 2 \\ 
\hline
Ratio 7-approx & 1 & 1.704 & 4 \\ 
\hline
Gap (T-EST - MAF) & 0 & 0.053 & 4 \\ 
\hline
Gap (4-approx - MAF) & 0 & 0.132 & 5 \\ 
\hline
Time T-EST & 0 & 0.422 & 15 \\ 
\hline
Time 4-approx & 0 & 1.182 & 14 \\ 
\hline
\end{tabular}
\caption{Summary of results for dataset $F_3^s$ (5789 gene trees) originally obtained from
GreenPhylDB database. In 5552 instances \textsc{TerminusEst} returned
the same size solution as \textsc{rSPR}, in 5415 cases \textsc{TerminusEst} returned the
same size solution as \textsc{NonbinaryCycleKiller} (running in 4-approx mode), and in 5209 cases
\textsc{NonbinaryCycleKiller} (running in 4-approx mode) returned the same size solution as \textsc{MAF}. In
this dataset the gene trees were also nonbinary, meaning that \textsc{NonbinaryCycleKiller} had
to use the MAF algorithm described in \cite{nonbinCK} instead of \textsc{rSPR}.}
\label{lab:f3s}
\end{table}


%

\end{backmatter}
\end{document}